\documentclass[10pt,conference]{IEEEtran}
\usepackage{amsmath,amssymb, mathrsfs, braket, mathtools, color, soul, cite, booktabs,amsthm}
\usepackage[capitalize]{cleveref}

\newtheorem{theorem}{Theorem}

\newtheorem{lemma}{Lemma}

\DeclareMathOperator*{\argmax}{argmax}

\crefformat{equation}{(#2#1#3)}

\numberwithin{equation}{section}

\begin{document}

\title{Phase Matching for a Generalized Grover's Algorithm}

\author{\IEEEauthorblockN{1\textsuperscript{st} Chris Cardullo}
\IEEEauthorblockA{\textit{Department of Mathematics} \\
\textit{NC State University}\\
Raleigh, USA\\
cacardul@ncsu.edu}
\and
\IEEEauthorblockN{2\textsuperscript{nd} Min Kang}
\IEEEauthorblockA{\textit{Department of Mathematics} \\
\textit{NC State University}\\
Raleigh, USA\\
mkang2@ncsu.edu}}

\maketitle

\begin{abstract}
We study the fully generalized Grover's algorithm to find the optimal phase changes for each step of the iteration to maximize gain in probability of observation of the target, and when phase matching is required. We find that classical Grover's algorithm and phase matching remains to be optimal till the target probability gets close 1. However, as the probability of observation approaches 1, the optimal phase changes differ from $\pi$ and no longer observe phase matching. We provide the optimization statement to find the optimal phase changes given the current amplitude vector and the size of the set. To analyze this formula, we approach it from a numerical and analytical perspective, with the analytical perspective focusing on special cases that simplify the optimization and allow for general statements about its behavior. Finally, we provide an example of a 5 qubit system and show that for the final iteration the optimal phase changes differ from traditional Grover's algorithm and do not observe phase matching, but lead to an increase in the probability of the target.
\end{abstract}

\begin{IEEEkeywords}
quantum computing, generalized Grover's algorithm, optimization phase change, phase matching
\end{IEEEkeywords}

\textit{Note: This work has been submitted to the IEEE for possible publication. Copyright may be transferred without notice, after which this version may no longer be accessible.}

\section{{\bf Introduction}}

Grover's algorithm is an incredibly important algorithm in the field of quantum computers due to a clear advantage over the same unstructured search in classical computers \cite{Gro1}. However, this speed-up is only quadratic, and it was shown that any unstructured search algorithm in quantum computers would only result in a quadratic speed-up at best \cite{BBHT}. While it isn't possible to speed up the algorithm by an order, it is still possible to speed it up by a scalar, hence the continued interest in its optimization.

Due to quantum computing being in what is called the NISQ era (Noisy Intermediate-Scale Quantum) \cite{Pre}, there has been much research into implementing Grover's algorithm with these limitations in mind (see \cite{LQZY}\cite{RJS}\cite{Heetal}\cite{QLX} and \cite{ZZDW}). However, there has also been continued interest in the theoretical side of Grover's algorithm, and optimizing it despite current limitations in technology. 

The area that relates the most to our work in this paper is the topic of phase matching, which is the interest in the two phase rotations that are performed in one Grover iterate and when they need to match. An early example of research into phase matching was done by Long et al \cite{LLZN} where they found that for a large number of iterations, the probability of observing the target is highest when both phases match. Phase matching was further analyzed to understand the role it played when searching for multiple targets using Grover's algorithm. P. Li and S. Li found that if the ratio of target elements was $1/3$ of the set, then by using a different type of phase matching, one Grover iterate would result in a high probability of observation \cite{LL}. This idea has influenced much of the current research into phase matching with some finding the appropriate phase matching criteria if the ratio of target elements is known \cite{GTP}, when there is uncertainty on the number of target elements \cite{ZWBW}, or there is just a lower bound on the ratio of target elements \cite{CBD}. Our work differs from these in that we are interested in the case where there is a single target element, and we wish to understand how that affects the optimal phase rotations and if there is a phase matching criteria.

In our previous work, we analyzed a generalized Grover's algorithm with an arbitrary amplitude vector and showed that the optimal phase change for a Hadamard initialization remains that of the classical Grover's algorithm only until the target probability reaches a threshold quite close to one, but that the optimal phase change has a non-trivial relationship with the complexity of the amplitude vector under different initializations \cite{CK}. In this paper, we will extend this analysis and consider a fully generalized Grover's algorithm, and check the phase matching criteria.

 To do this, we calculated the amplitude after an iteration of a fully generalized Grover's algorithm, and simplified it to a much reduced form of argmax statement to find the optimal phase changes. This can be found in \cref{th:simplifiedargmax}. We then analyzed it numerically and analytically. For the numerical solutions, we found that phase matching with classical Grover's algorithm is optimal for a period of time, but after a point, the relationship between the two phases is non-trivial and diverges from the classical Grover's algorithm. This point where the change occurs is sooner for smaller set sizes, and can be seen in \cref{fig:210phasematch}, \cref{fig:25phasematch}, and \cref{fig:24phasematch}. For the analytical solutions, we needed to make the assumption that the set size is large and either that the probability of observing the target is of order 1, or that the system is initialized with the Hadamard gate and we are optimizing for the first iteration. In both instances, we find that the optimal phase changes exhibit phase matching and follow classical Grover's algorithm.

Lastly, we would like to provide an outline for the rest of our paper. In section 2, we will discuss preliminaries on classical Grover's algorithm, and its fully generalized form. In section 3, we will analyze the fully generalized Grover's algorithm and simplify the amplitude after one iteration into a statement that we can optimize. Section 4 will feature our work on finding numerical solutions to the optimal phase changes, and section 5 will include our analytical solutions. Finally we will discuss our final results and takeaways as well as our future research goals in section 6.

\section{{\bf Preliminaries}}

Consider a finite space $\textbf{S}$ with $|\textbf{S}|=N$\, which we will call the search space. Each element is assigned a unique number from 0 to $N-1$, and are identified by the binary representation of their assigned number. In other words, $N=2^n$ with $n$ bits. Let $\ket{\tau}\in\textbf{S}$ be the element we wish to search for, which we will call the target element. Finally, we are equipped with a black box function $f$ such that $f(\ket{\tau})=1$ and $f(\ket{a})=0$ for all $\ket{a}\neq \ket{\tau}$. It is important to note that besides the size of \textbf{S} and the existence of a target element $\ket{\tau}$, we do not have any knowledge on the structure of the data. This is often called an unstructured data set.

Grover's algorithm is the classical search algorithm used in quantum computing for an unstructured data set. It starts with a normalized state, meaning the system is initialized with equal amplitudes assigned to each element, traditionally done using the Hadamard gate. After that the iterative process of the algorithm begins. The Grover iterate is defined as  $-FI_{\ket{0}}^\pi FI_{\ket{\tau}}^\pi$ where $I_{\ket{\cdot}}^\pi$ rotates only the amplitude of the element $\ket{\cdot}$ by $\pi$ radians, and $F$ is the Fourier transform. The Fourier transform matrix is an $n\times n$ matrix where $n$ is the number of qubits in the system, and its entries are defined as $F_{ij} = 2^{-n/2}(-1)^{\bar{i}\cdot \bar{j}}$ with $\bar{i},\bar{j}$ being the binary representations of the indices $i$ and $j$ both ranging from $0$ to $n-1$, and $\bar{i}\cdot\bar{j}$ being the binary dot product of $\bar{i}$ and $\bar{j}$. Traditionally $-FI_{\ket{0}}^\pi F$ is referred to as the operator $W$. One thing to note here is that the Fourier transform here is not required for Grover's algorithm to work. In fact, any unitary matrix will still lead to a functioning search algorithm \cite{Gro2}.

Since the algorithm can be represented as a series of matrix multiplications, we can determine the amplitude for our target state at any point in the iteration. However, if the size of $\textbf{S}$ is too large, then the resulting matrix multiplications will be difficult to calculate, diminishing the speedup we gain from Grover's algorithm. To that end, we will discuss a method of condensing the amplitude vector for the current state and the matrices to 2-dimensional linear maps regardless of the number of qubits in the system. The method we use for this comes from the work of Long \cite{Lo1}. Due to the nature of the Grover's iterate, every non-target element will experience the same change in amplitude, hence we may as well trace one single element by condensing all of them into a single, normalized element which we define as the following:

\[\ket{a} \equiv \frac{1}{\sqrt{N-1}}\sum_{i\neq \tau} \ket{i}\]

Now with this, let us state what an arbitrary amplitude vector will be. For \textbf{v} to be a valid amplitude vector its entries must be complex, and the sum of the modulus squared of each entry must equal one. We satisfy these conditions with the following definition of \textbf{v}:

\[\textbf{v} = \begin{bmatrix}
    \sin\alpha e^{i\theta_1}\\
    \cos\alpha e^{i\theta_2}
\end{bmatrix}\]

Note that in this representation, the amplitude of the target is $e^{i\theta_1}\sin\alpha$ and the amplitude for the remaining non-target representative $\ket{a}$ is $e^{i\theta_2}\cos\alpha$. Because the probability is unaffected by the phase of the amplitude, we can simplify it further.

\[\label{eq:phasediff}
    \textbf{v} = \begin{bmatrix}
    e^{i\theta_1}\sin\alpha \\
    e^{i\theta_2}\cos\alpha 
\end{bmatrix} = e^{i\theta_2}\begin{bmatrix}
    e^{i(\theta_1-\theta_2)}\sin\alpha\\
    \cos\alpha
\end{bmatrix}
\]

\noindent which shares the same probability as

\begin{equation}\label{eq:arbitraryamplitude}
    \textbf{v}^*=\begin{bmatrix}
    e^{i\theta}\sin\alpha \\
    \cos\alpha
\end{bmatrix}
\end{equation}

\noindent where $\theta=\theta_1-\theta_2$. A common way to initialize a state is by using the Hadamard gate. This results in a uniformly distributed amplitude and the amplitude is entirely real. In the 2-dimensional vector view we have just discussed, this will result in an initial amplitude of

\begin{equation}\label{eq:hadamardinitvec}
    \textbf{v} = \begin{bmatrix}
        \sin\alpha\\
        \cos\alpha
    \end{bmatrix} = \begin{bmatrix}
        \frac{1}{\sqrt{N}}\\
        \frac{\sqrt{N-1}}{\sqrt{N}}
    \end{bmatrix}
\end{equation}

Next, let us look at the condensed linear map that represents the traditional Grover's iterate.

\begin{equation}\label{eq:groveriteratepi}
    \mathbf{A}^\pi = \begin{bmatrix}
        e^{i\pi}\Big(\frac{1-e^{i\pi}}{N}-1\Big) & \frac{\sqrt{N-1}}{N}(1-e^{i\pi})\\
        \frac{\sqrt{N-1}}{N}e^{i\pi}(1-e^{i\pi}) & -\Big(\frac{1}{N}+\Big(1-\frac{1}{N}\Big)e^{i\pi}\Big)
    \end{bmatrix}
\end{equation}

With both \cref{eq:arbitraryamplitude} and \cref{eq:groveriteratepi}, we can now calculate the amplitude at any given step of the iteration process by applying the $\textbf{A}^\pi$ operator that same number of times to $\textbf{v}^*$. Allowing the angle $\pi$ above to be arbitrary, we consider the operator $\textbf{A}^\phi$ with the arbitrary phase change $\phi$ coming from $-FI^{\phi}_{\ket{0}}FI^{\phi}_{\ket{\tau}}$. We can then replace \cref{eq:groveriteratepi} by

\begin{equation}
    \textbf{A}^\phi= \begin{bmatrix}
    e^{i\phi}\Big(\frac{1-e^{i\phi}}{N}-1\Big) & \frac{\sqrt{N-1}}{N}(1-e^{i\phi})\\
    \frac{\sqrt{N-1}}{N}e^{i\phi}(1-e^{i\phi}) & -\Big(\frac{1}{N}+\Big(1-\frac{1}{N}\Big)e^{i\phi}\Big)
\end{bmatrix}\label{eq:groveriteratearbitrary}\end{equation}

When a phase change that differs from $\pi$ is used, we call this a generalized Grover's algorithm. In our previous work \cite{CK} we found that the optimal phase change at any given step had a non-trivial relationship to the complexity of the amplitude vector. This optimal phase change can be found by evaluating the following function

\begin{equation}\begin{split}\label{eq:argmax}
    \argmax_{\phi\in[-\pi,\pi]}  \biggl(&\frac{\sqrt{N-1}}{N}\cos(2\alpha)+\frac{1}{N}\sin(2\alpha)\cos(\phi+\theta)\biggr)(1-\\
    &-\cos(\phi))-\frac{1}{2}\sin(2\alpha)\cos(\phi+\theta).
\end{split}
\end{equation}

\noindent with $\theta$ and $\alpha$ coming from \cref{eq:arbitraryamplitude} and $\theta$ representing the complexity of the current amplitude vector. More precisely, $\theta$ is the phase difference difference between the two coordinates in the current amplitude vector \textbf{v} as seen in \cref{eq:phasediff}. In a traditional implementation of Grover's algorithm, the complexity will remain zero for the majority of steps, since it is initialized and made entirely real by use of the Hadamard gate. However, after a specific amplitude is achieved, we found that the optimal phase change does differ from $\pi$ even in the real amplitude vector case. 

\section{{\bf Analyzing Fully Generalized Grover's Algorithm}}

Let us start by re-examining the Grover's iterate $-FI_{\ket{0}}^\phi FI_{\ket{\tau}}^\phi$ and its matrix representation $\textbf{A}^\phi$ from \cref{eq:groveriteratearbitrary}. The fully generalized Grover iterate would be $-FI_{\ket{0}}^\psi FI_{\ket{\tau}}^\phi$, and its matrix representation is

\begin{equation}\label{eq:grovitmat}
    A^{\psi,\phi} = \resizebox{.7\hsize}{!}{$\begin{bmatrix}
        e^{i\phi}\bigl(\frac{1-e^{i\psi}}{N}-1\bigr) & \frac{\sqrt{N-1}}{N}(1-e^{i\psi})\\
        \frac{\sqrt{N-1}}{N}e^{i\phi}(1-e^{i\psi}) & -\bigl[\frac{1}{N}+(1-\frac{1}{N})e^{i\psi}\bigr]
    \end{bmatrix}$}
\end{equation}

We also will consider a general amplitude vector that we would have at any step in the iteration process. Let $\textbf{v}\in V$ where $\textbf{v}=[e^{i\theta}\sin\alpha\ , \cos\alpha]^\top$ and $\alpha\in[0,\pi/2]$. With this, we can now calculate the amplitude vector after one iteration.

\[
A^{\psi,\phi}\textbf{v} = \begin{bmatrix}
        e^{i(\theta+\phi)}\bigl(\frac{1-e^{i\psi}}{N}-1\bigr)\sin\alpha+\\
        +\frac{\sqrt{N-1}}{N}(1-e^{i\psi})\cos\alpha\\
    \\
        \frac{\sqrt{N-1}}{N}e^{i(\theta+\phi)}(1-e^{i\psi})\sin\alpha -\\
        - \bigl[\frac{1}{N}+(1-\frac{1}{N})e^{i\psi}\bigr]\cos\alpha
\end{bmatrix}
\]

Since we are trying to optimize the probability of observing the target state $\ket{\tau}$, we only need to focus on the first coordinate of the resulting vector. We will call this $(A^{\psi,\phi}\textbf{v})_1$. Thus we have that $P(\ket{\tau})$, the probability of finding the target $\ket{\tau}$ after an iteration is

\begin{equation}
    \label{eq:targetprobinit}
    \begin{split}
            P(\ket{\tau}) &=\ \biggl|e^{i(\theta+\phi)}\bigl(\frac{1-e^{i\psi}}{N}-1\bigr)\sin\alpha +\\
            &+\frac{\sqrt{N-1}}{N}(1-e^{i\psi})\cos\alpha\biggr|^2
    \end{split}
\end{equation}

Our goal is to better understand the relationship $\psi$ and $\phi$ have when maximizing this probability.

\begin{lemma}
    \label{lem:phaseadj}
    The target probability \cref{eq:targetprobinit} can be rewritten as 
    \begin{equation}\label{eq:argmaxupdate}
    \begin{split}
        P(\ket{\tau})&=\ \biggl|e^{i\phi}\bigl(\frac{1-e^{i\psi}}{N}-1\bigr)\sin\alpha +\\
        &+\frac{\sqrt{N-1}}{N}(1-e^{i\psi})\cos\alpha\biggr|^2
    \end{split}
    \end{equation}
\end{lemma}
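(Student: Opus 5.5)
The plan is to observe that in \cref{eq:targetprobinit} the phase $\theta$ of the current amplitude vector and the oracle phase $\phi$ enter only through their sum $\theta+\phi$. The phase $\psi$ and the magnitude $\alpha$ enter separately, and neither is coupled to $\theta$. Looking at the first coordinate $(A^{\psi,\phi}\textbf{v})_1$, the term $e^{i(\theta+\phi)}$ multiplies $\bigl(\frac{1-e^{i\psi}}{N}-1\bigr)\sin\alpha$. The second summand $\frac{\sqrt{N-1}}{N}(1-e^{i\psi})\cos\alpha$ contains neither $\theta$ nor $\phi$. So I will write $P(\ket{\tau})=G(\theta+\phi,\psi)$ for the explicit function
\[
G(\omega,\psi)=\biggl|e^{i\omega}\bigl(\tfrac{1-e^{i\psi}}{N}-1\bigr)\sin\alpha+\tfrac{\sqrt{N-1}}{N}(1-e^{i\psi})\cos\alpha\biggr|^2,
\]
which is $2\pi$-periodic in $\omega$.

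Next I will introduce the shifted oracle phase $\phi' = \phi+\theta$, reduced modulo $2\pi$ into $[-\pi,\pi]$. Since $\theta$ is fixed by the current state, the map $\phi\mapsto\phi'$ is a bijection of the circle. By periodicity of $G$, $P(\ket{\tau})=G(\phi',\psi)$. Hence, as $(\phi,\psi)$ ranges over all phase pairs, the set of attainable target probabilities is the same as that obtained from \cref{eq:argmaxupdate} after relabeling $\phi'$ as $\phi$. The same bijection shows that the maximum value is unchanged. It also shows that an optimal $\phi'$ for \cref{eq:argmaxupdate} corresponds to the optimal oracle phase $\phi'-\theta$ for \cref{eq:targetprobinit}, with $\psi$ untouched.

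The main obstacle is not computational but one of interpretation. The equality in the lemma is not a pointwise identity in $\phi$ for fixed $\theta\neq 0$. It is an identity after the change of variables $\phi\mapsto\phi-\theta$. I will therefore state the lemma precisely as: for every $(\phi,\psi)$ there is $\phi'$ with \cref{eq:targetprobinit} at $(\phi,\psi)$ equal to \cref{eq:argmaxupdate} at $(\phi',\psi)$, and conversely. I will also note explicitly that the optimization problem is invariant under this shift. This justifies dropping $\theta$ for the remainder of the analysis, with the understanding that any phase $\phi$ found there must be shifted back by $-\theta$ to obtain the actual oracle rotation.
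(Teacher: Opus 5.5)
Your proposal is correct and matches the paper's proof. Both arguments observe that $\phi$ and $\theta$ enter $(A^{\psi,\phi}\mathbf{v})_1$ only through $e^{i(\theta+\phi)}$, substitute $\phi^*=\phi+\theta$, relabel $\phi^*$ as $\phi$, and note that any optimal phase found afterwards must be shifted back by $\theta$ (your extra care about reducing modulo $2\pi$, so that the argmax over $[-\pi,\pi]$ is preserved, is a minor refinement the paper leaves implicit).
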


\begin{proof}
    Observing $(A^{\psi,\phi}\textbf{v})_1$, we note that every instance of $\phi$, one of our phase changes, is summed with $\theta$, the phase given by the original amplitude vector. Using this fact, we see that
    \[\biggl(A^{\psi,\phi}\begin{bmatrix}
    e^{i\theta}\sin\alpha\\
    \cos\alpha
    \end{bmatrix}\biggr)_1=\biggl(A^{\psi,\phi+\theta}\begin{bmatrix}
    \sin\alpha\\
    \cos\alpha\end{bmatrix}\biggr)_1\]
    We now let $\phi^*=\phi+\theta$. Using this, we can restate \cref{eq:targetprobinit} as
    \begin{equation}\label{eq:probphistar}\begin{split}
        P(\ket{\tau}&=\biggl|e^{i\phi^*}\bigl(\frac{1-e^{i\psi}}{N}-1\bigr)\sin\alpha +\\
        &+\frac{\sqrt{N-1}}{N}(1-e^{i\psi})\cos\alpha\biggr|^2
    \end{split}
    \end{equation}
    \noindent To simplify notation, we will just use $\phi$ instead of $\phi^*$ through the rest of the paper. When we interpret the optimal $\phi^*$, we will have to recall this and adjust it by $\theta$.
\end{proof}

We will now work with this simplified version to find an argmax statement that can be better analyzed. The solutions to this argmax statement will give us the optimal surface of $\phi$ and $\psi$ values based on fixed $N$ and a range of $\alpha$ values.

\begin{theorem}
    \label{th:simplifiedargmax}
    Suppose we carry out the generalized Grover's algorithm with differing phases as described in \cref{eq:grovitmat}. Then the optimal phase changes at any given step can be found by solving the following optimization
    \begin{equation}\label{eq:argmaxf}
    \begin{split}
                \argmax_{\psi,\phi\in[-\pi,\pi]}&\ \biggl[(N-1)^{3/2}\sin(2\alpha)(\cos(\phi-\psi)-\cos\phi)-\\
                &-2(N-1)\cos(2\alpha)\cos\psi-\\
                & -(N-1)^{1/2}\sin(2\alpha)(\cos(\phi+\psi)-\cos\phi)\biggr]
    \end{split}
    \end{equation}
\end{theorem}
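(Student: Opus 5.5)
The plan is to expand the modulus squared in \cref{eq:argmaxupdate} from Lemma~\ref{lem:phaseadj} directly. Then discard every term that does not depend on $(\psi,\phi)$, and multiply by the positive constant $N^2$. Neither step changes the argmax. To organize the expansion, write the target amplitude as $z=e^{i\phi}u\sin\alpha+w\cos\alpha$ with
\[u=\frac{1-N-e^{i\psi}}{N},\qquad w=\frac{\sqrt{N-1}}{N}\bigl(1-e^{i\psi}\bigr).\]
Then
\[|z|^2=|u|^2\sin^2\alpha+|w|^2\cos^2\alpha+\sin(2\alpha)\,\mathrm{Re}\bigl(e^{i\phi}u\bar w\bigr).\]

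First I handle the diagonal terms. One computes $|u|^2=\bigl((N-1)^2+2(N-1)\cos\psi+1\bigr)/N^2$ and $|w|^2=(N-1)(2-2\cos\psi)/N^2$. Substituting $\sin^2\alpha=(1-\cos2\alpha)/2$ and $\cos^2\alpha=(1+\cos2\alpha)/2$ splits these terms into two pieces.
\begin{itemize}
\item The piece $(|u|^2+|w|^2)/2$ has cancelling $\cos\psi$ terms, so it is a constant and can be dropped.
\item The piece $\tfrac12\cos(2\alpha)(|w|^2-|u|^2)$ contributes $-2(N-1)\cos(2\alpha)\cos\psi/N^2$ plus a constant.
\end{itemize}
This produces the middle term of \cref{eq:argmaxf}.

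Next I handle the cross term. Expanding gives
\[u\bar w=\frac{\sqrt{N-1}}{N^2}\bigl[(2-N)-(1-N)e^{-i\psi}-e^{i\psi}\bigr].\]
Multiplying by $e^{i\phi}$ and taking the real part yields
\[\frac{\sqrt{N-1}}{N^2}\bigl[(2-N)\cos\phi+(N-1)\cos(\phi-\psi)-\cos(\phi+\psi)\bigr].\]
Splitting $(2-N)\cos\phi=-(N-1)\cos\phi+\cos\phi$ regroups this as $(N-1)^{3/2}(\cos(\phi-\psi)-\cos\phi)-(N-1)^{1/2}(\cos(\phi+\psi)-\cos\phi)$, divided by $N^2$. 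Multiplying by $\sin(2\alpha)$ gives the first and third terms of \cref{eq:argmaxf}.

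Finally, $P(\ket{\tau})$ equals $N^{-2}$ times the bracket in \cref{eq:argmaxf}, plus a constant depending only on $N$ and $\alpha$. Hence the two functions have the same maximizers over $\psi,\phi\in[-\pi,\pi]$. As noted in Lemma~\ref{lem:phaseadj}, the resulting $\phi$ is understood as $\phi+\theta$.

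The main obstacle is bookkeeping rather than ideas. The critical points are the signs in $u=(1-N-e^{i\psi})/N$ and the conjugate $\bar w$ in the cross term. Both the cancellation of $\cos\psi$ in the constant part and the exact grouping of coefficients into $(N-1)^{3/2}$ and $(N-1)^{1/2}$ hinge on these signs, so I would double-check them, for instance by testing $\psi=\phi=\pi$, $\alpha$ small against the known Grover step.
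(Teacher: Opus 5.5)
Your proposal is correct and follows essentially the same route as the paper: expand $|f(\psi,\phi)|^2$ directly and discard the part independent of $(\psi,\phi)$. Your leftover constant $\tfrac12-\tfrac{(N-2)^2\cos 2\alpha}{2N^2}$ equals the paper's $c=\frac{1}{N^2}\bigl[(N-2)^2\sin^2\alpha+2(N-1)\bigr]$, and the only difference is bookkeeping: you use $|a+b|^2=|a|^2+|b|^2+2\,\mathrm{Re}(a\bar b)$ where the paper splits into real and imaginary parts and computes $f_1,f_2,f_3$; your signs and coefficients all check out.
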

\begin{proof}
    To prove this, let us start with further analyzing \cref{eq:probphistar} with $\phi=\phi^*$. We will first focus on the function inside of the modulus in \cref{eq:probphistar}, which we denote as $f = f(\psi,\phi)$ below.

    \[\begin{split}
            f(\psi,\phi) &= e^{i\phi}\bigl(\frac{1-e^{i\psi}}{N}-1\bigr)\sin\alpha +\frac{\sqrt{N-1}}{N}(1-e^{i\psi})\cos\alpha\\
            & = \frac{1}{N}\biggl[(e^{i\phi}(1-N)-e^{i(\phi+\psi)})\sin\alpha+\\
            &+\sqrt{N-1}(1-e^{i\psi})\cos\alpha \biggr]
    \end{split}
    \]

    \noindent This is a complex valued function, so we will rewrite in the form of $f=Re(f)+iIm(f)$ where

    \[\begin{split}
        Re(f)&=\frac{-1}{N}\biggl[((N-1)\cos\phi+\cos(\phi+\theta))\sin\alpha+\\
        &+\sqrt{N-1}(\cos\psi-1)\cos\alpha \biggr],
    \end{split}\]
    \[\begin{split}
        Im(f) &= \frac{-1}{N}\biggl[((N-1)\sin\phi+\sin(\phi+\psi))\sin\alpha+\\
        &+\sqrt{N-1}\sin\psi\cos\alpha \biggr]
    \end{split}\]

    \noindent We can use this to rewrite the probability as

    \[\begin{split}
        P(\ket{\tau}) &= |f(\psi,\phi)|^2\\
        &=\frac{1}{N^2}\biggl\{\sin^2\alpha\biggl[((N-1)\cos\phi+\cos(\phi+\psi))^2+\\
        &+((N-1)\sin\phi+\sin(\phi+\psi))^2\biggr]+\\
        & +(N-1)\cos^2\alpha\biggl[(\cos\psi-1)^2+\sin^2\psi\biggr]+\\
        & +2\sin\alpha\cos\alpha\sqrt{N-1}\biggl[(\cos\psi-1)((N-1)\cos\phi+\\
        &+\cos(\phi+\psi))+\sin\psi((N-1)\sin\phi+\\
        &+\sin(\phi+\psi))\biggr]\biggr\}.
    \end{split}
    \]

    \noindent We wish to further simplify this. To do so, we consider the following terms properly defined to simplify the expression above, and simplify them each.

    \[\begin{split}
        f_1 &=((N-1)\cos\phi+\cos(\phi+\psi))^2+((N-1)\sin\phi+\\
        &+\sin(\phi+\psi))^2\\
        &= (N-1)^2\cos^2\phi+\cos^2(\phi+\psi)+\\
        &+2(N-1)\cos\phi\cos(\phi+\psi)+\\
        &+(N-1)^2\sin^2\phi+\sin^2(\phi+\psi)+\\
        &+2(N-1)\sin\phi\sin(\phi+\psi)\\
        &=(N-1)^2+1+2(N-1)(\cos(\phi+\psi)\cos\phi+\\
        &+\sin(\phi+\psi)\sin\phi)\\
        &=(N-1)^2+1+2(N-1)\cos\psi\\[1em]
        f_2 &=(\cos\psi-1)^2+\sin^2\psi\\
        &=2(1-\cos\psi)\\[1em]
        f_3 &=(\cos\psi-1)((N-1)\cos\phi+\cos(\phi+\psi))+\\
        &+\sin\psi((N-1)\sin\phi+\sin(\phi+\psi))\\
        &=(N-1)[\cos\psi\cos\phi+\sin\psi\sin\phi-\cos\phi]+\\
        &+[\cos\psi\cos(\phi+\psi)+ \sin\psi\sin(\phi+\psi)-\cos(\phi+\psi)]\\
        &=(N-1)[\cos(\phi-\psi)-\cos\phi]-[\cos(\phi+\psi)-\cos\phi].
    \end{split}
    \]

    \noindent We now return to calculating the probability $P(\ket{\tau})$.

    \[\begin{split}
        P(\ket{\tau}) &= \frac{1}{N^2}\biggl\{(\sin^2\alpha) f_1 + (N-1)(\cos^2\alpha) f_2 +\\
        &+(2\sin\alpha\cos\alpha\sqrt{N-1}) f_3\biggr\}^2\\
        &=\frac{1}{N^2}\biggl\{\sin^2\alpha[(N-1)^2+1+2(N-1)\cos\psi]+\\
        &+2(N-1)\cos^2\alpha(1-\cos\psi)+\\
        &+2\sin\alpha\cos\alpha\sqrt{N-1}[(N-1)(\cos(\phi-\psi)-\\
        &-\cos\phi)-(\cos(\phi+\psi)-\cos\phi)]\biggr\}
    \end{split}\]

    \[\begin{split}
        &=\frac{1}{N^2}\biggl\{\sin^2\alpha[(N-1)^2+1]+\cos^2\alpha(2(N-1))+\\
        &+2(N-1)\cos\psi(\sin^2\alpha-\cos^2\alpha)+\\
        &+2\sin\alpha\cos\alpha\sqrt{N-1}[(N-1)(\cos(\phi-\psi)-\cos\phi)-\\
        &-(\cos(\phi+\psi)-\cos\phi)]\biggr\}\\
        &=\frac{1}{N^2}\biggl\{\sin^2\alpha(N-2)^2+2(N-1)-\\
        &-2(N-1)\cos(2\alpha)\cos\psi+\\
        &+\sin(2\alpha)\sqrt{N-1}[(N-1)(\cos(\phi-\psi)-\cos\phi)-\\
        &-(\cos(\phi+\psi)-\cos\phi)]\biggr\}
    \end{split}\]

    Let us now examine the final line further. We can break this into two distinct parts where one is constant in $\phi$ and $\psi$, which we will denote by $c$, and the other depends on them, which we will call $g=g(\psi,\phi)$.

    \[
    c = \frac{1}{N^2}[(N-2)^2\sin^2\alpha+2(N-1)]
    \]
    \[
    \begin{split}
        g(\psi,\phi)&=\frac{1}{N^2}\biggl\{(N-1)^{3/2}\sin(2\alpha)(\cos(\phi-\psi)-\cos\phi)-\\
        &-2(N-1)\cos(2\alpha)\cos\psi-\\
        &-(N-1)^{1/2}\sin(2\alpha)(\cos(\phi+\psi)-\cos\phi)\biggr\}
    \end{split}
    \]

    Therefore to find the phase changes that will maximize the probability at any given step, we need to solve the argument that maximizes the function $g$ as follows.

    \[
    \begin{split}
                \argmax_{\psi,\phi\in[-\pi,\pi]}&\ \biggl\{(N-1)^{3/2}\sin(2\alpha)(\cos(\phi-\psi)-\cos\phi) -\\
                &-2(N-1)\cos(2\alpha)\cos\psi-\\
                & -(N-1)^{1/2}\sin(2\alpha)(\cos(\phi+\psi)-\cos\phi)\biggr\}
    \end{split}
    \]
    
\end{proof}

We will now take this function and analyze it from two different angles. First, we will examine it numerically to identify the optimal phase changes $\phi$ and $\psi$. Then we will approach it from an analytical point of view to better understand the objective function, and also provide insight into the behavior of the optimal $(\psi,\phi)$ when the data set size $N$ is very large.

\section{\bf Numerical Solution to Objective Function}

To analyze \cref{eq:argmaxf} numerically we utilized the global optimization toolbox in MATLAB. Since $\alpha$ is a given value from the current amplitude vector, we created a list of 1000 evenly-spaced values over $[0,\pi/2]$.  For each $\alpha$ value given, we then ran interior-point methods with randomly selected initial points for both $\psi$ and $\phi$ to find the global maximum and the corresponding optimal $\phi$ and $\psi$ values. Since the objective function also relies on $N$, we ran the analysis for three different $N$ values, $N=2^{10}$, $N=2^5$, and $N=2^4$. We then rounded the $\psi$ and $\phi$ values to the hundredth place to account for minor discrepancies coming from the optimization process. The results are graphed in \cref{fig:210phasematch}, \cref{fig:25phasematch}, and \cref{fig:24phasematch}. The x-axis is the $\phi$ phase change and the y-axis is the $\psi$ phase change. Each point represents a given $\alpha$ value and its optimal $\phi$ and $\psi$ phase changes. The labeled point is where $\phi=\psi=\pi$ i.e. classical Grover's algorithm, and is the only instance where multiple $\alpha$ values shared the same optimal $\phi$ and $\psi$ phase changes.


\begin{figure}[h!]
    \centering
    \includegraphics[width=.6\linewidth]{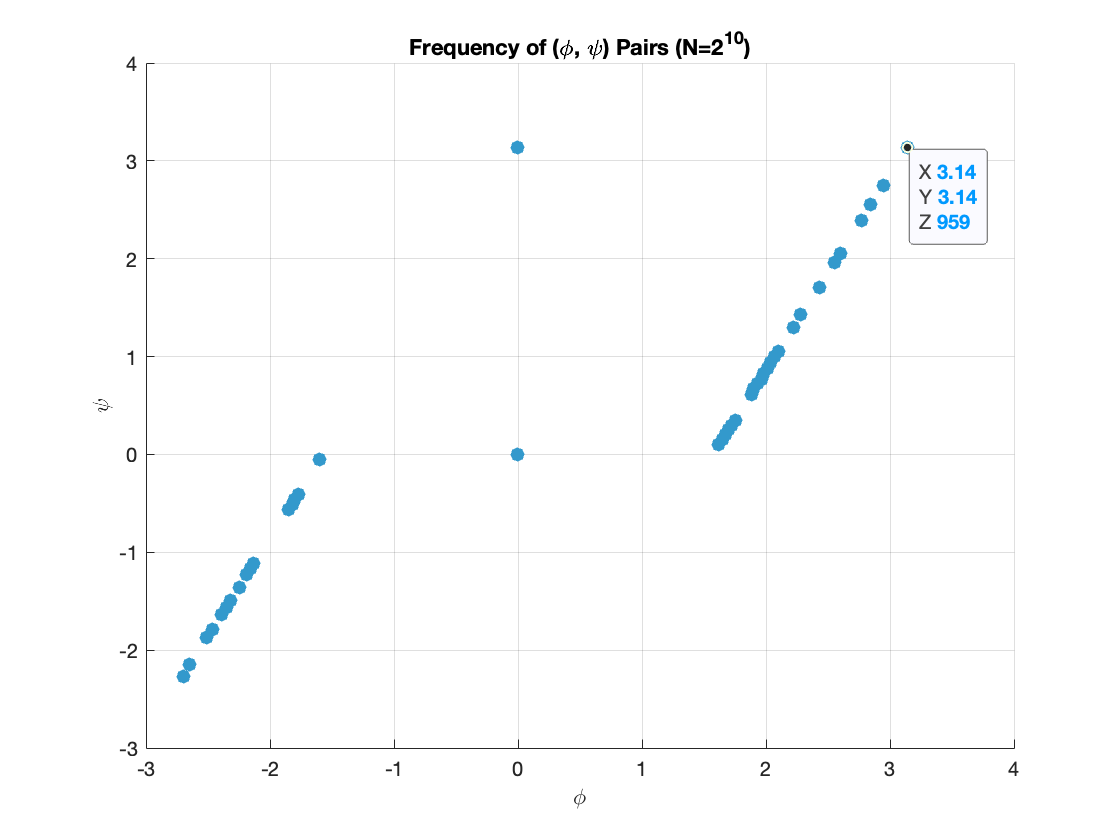}
    \caption{Graph of the optimal $\phi$ $\psi$ pairs for a set of size $2^{10}$.}
    \label{fig:210phasematch}
\end{figure}

\begin{figure}[h!]
    \centering
    \includegraphics[width=.6\linewidth]{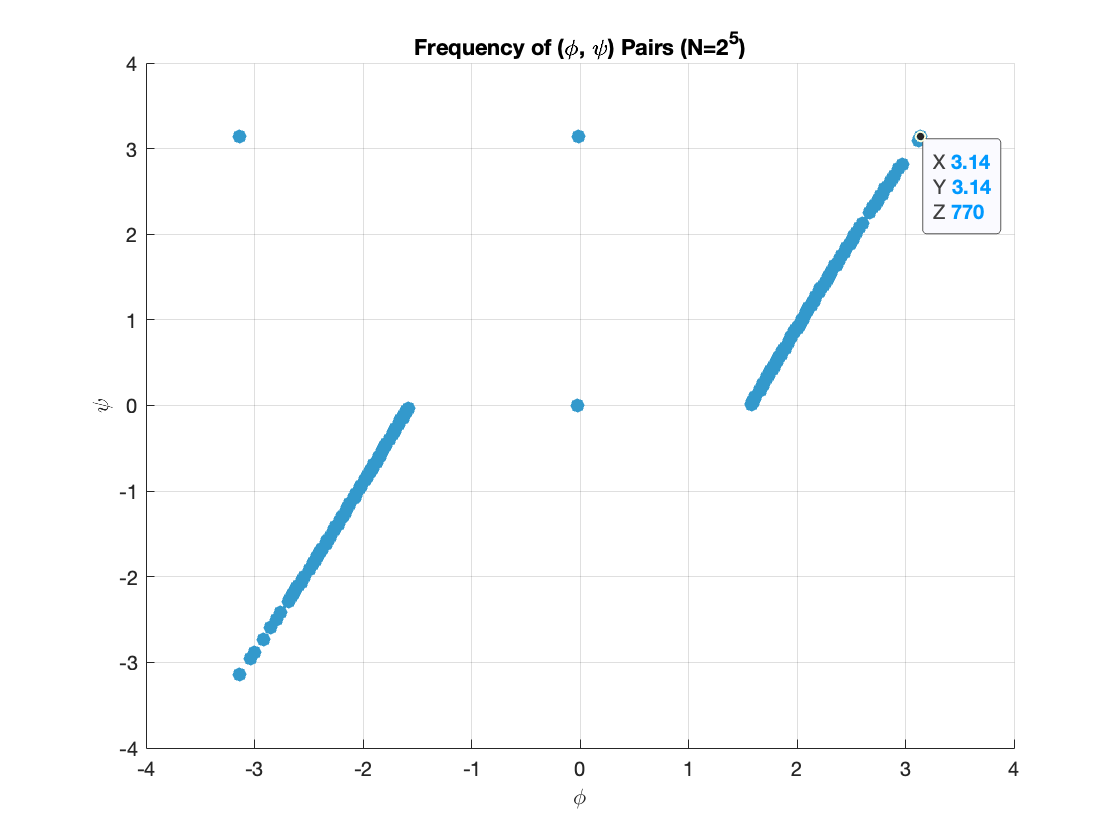}
    \caption{Graph of the optimal $\phi$ $\psi$ pairs for a set of size $2^{5}$.}
    \label{fig:25phasematch}
\end{figure}

\begin{figure}[h!]
    \centering
    \includegraphics[width=.6\linewidth]{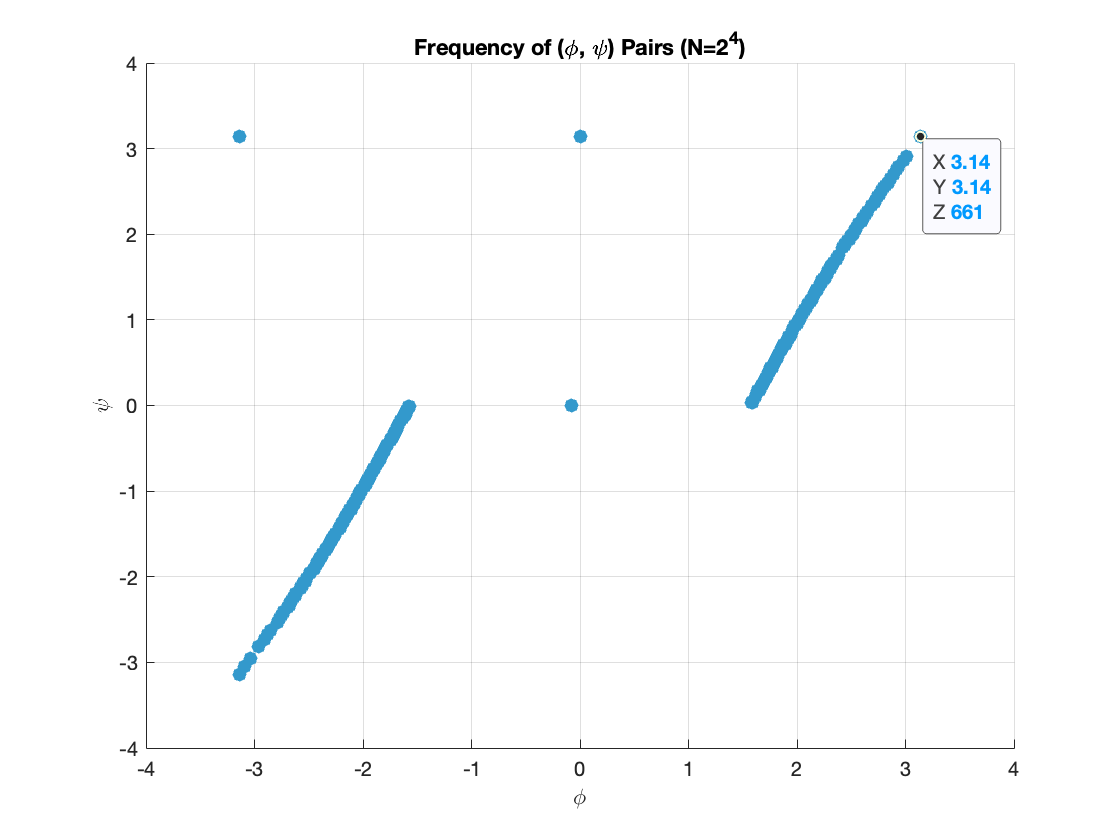}
    \caption{Graph of the optimal $\phi$ $\psi$ pairs for a set of size $2^{4}$.}
    \label{fig:24phasematch}
\end{figure}

As can be seen in \cref{fig:210phasematch}, for $N=2^{10}$ the vast majority (959 of 1000) of $\alpha$ values have optimal $\phi$ and $\psi$ values of $\pi$. Investigating this further, we see that for all $\alpha$ values from $\alpha = 0$ to $\alpha = 1.5095$, there is phase matching and both equal $\pi$. This $\alpha$ value equates to a probability of observing the target state at .9962. However for larger $\alpha$ values, we no longer observe phase matching and the two instead have a non-trivial relationship. Since each point not labeled was found to be optimal only one time, we see that there is a lot of change in optimal phases as the probability of observing the target approaches one.

The results for $N=2^5$ (seen in \cref{fig:25phasematch}) and $N=2^4$ (seen in \cref{fig:24phasematch}) follow this same pattern, just with a lower cutoff for $\alpha$. For $N=2^5$, we have a cutoff at $\alpha = 1.2154$ and $P(\ket{\tau}) = .8789$, and for $N=2^4$, $\alpha = 1.0661$ and $P(\ket{\tau})=.7662$. This result does correspond with previous observations in the work by Long \cite{Lo1} and the common understanding that for smaller $N$ values, there is a substantial region in which classical Grover's algorithm is not optimal. However this does differ from that understanding because we see that phase matching is not optimal.

We would like to note here that the $\phi$ value we are optimizing for is in fact not truly the phase change $\phi$ that would be realized in a fully generalized Grover's algorithm. As mentioned in \cref{lem:phaseadj}, this $\phi$ is actually the phase change plus $\theta$, the complexity of our original amplitude vector. Therefore the phase matching we are seeing is really between $\psi$ and $\phi+\theta$. In the case that a Hadamard initialization is used for the first amplitude, then $\theta = 0$ and will remain zero after every application of the classical Grover's iterate with $\phi=\psi=\pi$. This correlates with our previous work of optimizing a generalized Grover's algorithm where we held the phase changes equal and saw that the optimal phase change was approximately $\pi-\theta$ for $\alpha$ before a specific threshold that we identified is reached \cite{CK}.

Putting this all together, we can analyze the numerical results as follows. For a set of size $N\geq 2^{10}$ and with the Hadamard initialization, classical Grover's algorithm will be sufficient in raising the probability of observing the target state until the probability gets fairly close to 1. However if there is a different initialization that results in a complex initial amplitude, then one of the phase changes will depend on that complexity, and result in non-phase matching, while the other phase change will stay $\pi$ until the probability of observing the target state gets close to 1.

\section{{\bf Analytical Observations of Objective Function}}

Now we investigate this optimization to better understand the relationship between $\psi$ and $\phi$ when optimizing the fully generalized Grover's iterate. Let us return to the optimization problem, \cref{eq:argmaxf}. We wish to analyze this statement by way of its leading term, and to enable this, let us assume that $N$ is sufficiently large. There are now two different ways that we can start the analysis. First in \cref{th:hadamardinit} we will assume that the state is initialized with the Hadamard gate, and find the optimal phase change for the first iteration of a fully generalized Grover's algorithm. Next in \cref{th:po1}, we assume that there have been some amount of iterations such that $P(\ket{\tau})$ stays away from zero and also is not too close to one. To formalize this assumption, we consider the case where there exists a function $f(N)$ such that $\lim_{N\to\infty}f(N)=\infty$ and $P(\ket{\tau})=\frac{f(N)}{N}$.

\begin{theorem}\label{th:hadamardinit}
    Assume that $N$ is sufficiently large and the amplitude vector is initialized with the Hadamard gate. Then the first iteration of a fully generalized Grover's algorithm will have an optimal phase change of $\phi=\psi=\pi$.
\end{theorem}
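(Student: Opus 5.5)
We will substitute the Hadamard values directly into the objective of \cref{th:simplifiedargmax} and compare the result with its value at $(\pi,\pi)$. By \cref{eq:hadamardinitvec} we have $\theta=0$, $\sin\alpha=1/\sqrt{N}$ and $\cos\alpha=\sqrt{N-1}/\sqrt{N}$. Since $\theta=0$, the shift $\phi^*=\phi+\theta$ of \cref{lem:phaseadj} is trivial, so the optimizing $\phi$ is the actual phase change. We also have
\[
\sin(2\alpha)=\frac{2\sqrt{N-1}}{N},\qquad \cos(2\alpha)=\frac{N-2}{N}.
\]
Inserting these into \cref{eq:argmaxf} and dividing by the positive constant $2(N-1)/N$ gives the equivalent objective
\[
h(\psi,\phi)=(N-1)\cos(\phi-\psi)-(N-2)(\cos\phi+\cos\psi)-\cos(\phi+\psi).
\]
Here we used that the two $\cos\phi$ contributions combine as $-(N-1)\cos\phi+\cos\phi=-(N-2)\cos\phi$. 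At $\phi=\psi=\pi$ we get $h(\pi,\pi)=3N-6$.

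Next we regroup $h$ so that the dominant part is an $N$-independent trigonometric function:
\[
h=(N-2)\bigl[\cos(\phi-\psi)-\cos\phi-\cos\psi\bigr]+\bigl[\cos(\phi-\psi)-\cos(\phi+\psi)\bigr].
\]
The second bracket equals $2\sin\phi\sin\psi$, so it is bounded by $2$ in absolute value. For the first bracket, set $u=\phi-\pi$ and $v=\psi-\pi$. Then
\[
D(u,v):=3-\bigl[\cos(\phi-\psi)-\cos\phi-\cos\psi\bigr]=3-\cos(u-v)-\cos u-\cos v\ge 0,
\]
with equality only when $\cos u=\cos v=\cos(u-v)=1$, that is, only at $u=v=0$. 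Consequently
\[
h(\pi,\pi)-h(\psi,\phi)=(N-2)\,D(u,v)-2\sin u\sin v,
\]
where we used $\sin\phi\sin\psi=\sin u\sin v$. It therefore suffices to show that $(N-2)D>2\sin u\sin v$ whenever $(u,v)\neq(0,0)$, for $N$ sufficiently large.

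This last comparison is the main obstacle, because the correction term $2\sin u\sin v$ can be positive, so it must be dominated uniformly. We will split into two regions. Near the origin, say $|u|,|v|\le\delta$, we Taylor expand: $D=\tfrac12\bigl(u^2+v^2+(u-v)^2\bigr)+O(|u|^4+|v|^4)\ge c\,(u^2+v^2)$ for some $c>0$. Meanwhile $2\sin u\sin v\le 2|uv|\le u^2+v^2$, so the inequality holds once $(N-2)c>1$. Away from the origin, $D$ is a continuous function on a compact set and is positive there, so it has a positive minimum $m_\delta$; the inequality then holds as soon as $(N-2)m_\delta>2$. Together the two regions show that $(\pi,\pi)$ is the unique maximizer for all sufficiently large $N$, which is the claim. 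We would also add the remark that to leading order the objective is simply $N\bigl[\cos(\phi-\psi)-\cos\phi-\cos\psi\bigr]$, which is visibly maximized by phase matching at $\pi$.
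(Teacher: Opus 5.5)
Your proof is correct, and it takes a different route from the paper.

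The paper substitutes the Hadamard values and keeps only the $O(N)$ part $\cos(\phi-\psi)-\cos\phi-\cos\psi$. It then analyzes the critical points of that function: it finds six of them, classifies them with Hessians, and concludes that $(\pi,\pi)$ is a local maximum. It also identifies $(\pi/3,5\pi/3)$ and $(5\pi/3,\pi/3)$ as local minima, i.e.\ the worst phase choices.

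You instead keep the exact objective and write $h(\pi,\pi)-h=(N-2)D(u,v)-2\sin u\sin v$. You use the elementary fact that $D\ge 0$, with equality only at the origin. You then dominate the bounded correction $2\sin u\sin v$ uniformly: by a quadratic lower bound on $D$ near the origin, and by compactness away from it.

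What your approach buys is rigor and globality. You show that $(\pi,\pi)$ is the unique global maximizer of the actual objective for all sufficiently large $N$. The paper's argument only establishes a local maximum of the leading-order term. Globality there would also require comparing the values at the six critical points, and the passage from the leading term to the full objective is left heuristic. What the paper's approach buys is a picture of the whole landscape of the leading term, including the saddles and the minimizing phase pairs.

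One small point to tidy: since $u=\phi-\pi$ ranges over $[-2\pi,0]$, $D$ also vanishes when $u$ or $v$ equals $-2\pi$ (i.e.\ $\phi$ or $\psi$ equals $-\pi$). Your compactness step should therefore be carried out on the torus. Uniqueness of the maximizer is then understood modulo $2\pi$, with $-\pi$ identified with $\pi$.
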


\begin{proof}
    Recall from \cref{eq:hadamardinitvec} that if the system is initialized with the Hadamard gate, we know that $\sin\alpha = \frac{1}{\sqrt{N}}$ and $\cos\alpha = \sqrt{\frac{N-1}{N}}$ . Therefore if we return to \cref{eq:argmaxf}, we can simplify it with this initial condition, and those gained from using trig identities.

\[\begin{split}
    \max_{\phi,\psi}f(\phi,\psi) &= \max_{\phi,\psi}\biggl\{(N-1)^{3/2}\sin(2\alpha)(\cos(\phi-\psi)-\\
    &-\cos\phi)-2(N-1)\cos(2\alpha)\cos\psi-\\
    &-(N-1)^{1/2}\sin(2\alpha)(\cos(\phi+\psi)-\cos\phi)\biggl\}
\end{split}\]
\[\begin{split}
    &=\max_{\phi,\psi}\biggl\{\frac{2(N-1)^2}{N}(\cos(\phi-\psi)-\cos\phi)-\\
    &-\frac{2(N-1)(N-2)}{N}\cos\psi-\\
    &-\frac{2(N-1)}{N}(\cos(\phi+\psi)-\cos\phi)\biggr\}
\end{split}\]

Since we assume that $N$ is sufficiently large, we may as well focus on the leading order term given by

\[\cos(\phi-\psi)-\cos\phi-\cos\psi = h(\phi,\psi)\]

\noindent since this is the term with coefficients order $O(N)$. With $h(\phi,\psi)$ in hand, we can now perform a critical point analysis of it. We get that  $\nabla h = \langle-\sin(\phi-\psi)+\sin\phi,\ \sin(\phi-\psi)+\sin\psi\rangle$. To get critical points, we observe that

\[\begin{cases}
    \sin(\phi-\psi)=\sin\phi\\
    \sin(\phi-\psi)=-\sin\psi
\end{cases}\Rightarrow\sin\phi = -\sin\psi\]

Using this relationship, we get that

\[\begin{split}
    \sin\phi &= \sin(\phi-\psi) = \sin\phi\cos\psi-\cos\phi\sin\psi\\
    &= \sin\phi\cos\psi+\cos\phi\sin\phi\\
\end{split}\]

\noindent which is equivalent to

\[0=\sin\phi(\cos\psi+\cos\phi-1)\]

From $\sin\phi = 0$ we get the critical points $(0,0), (0,\pi), (\pi,0),$ and $(\pi,\pi)$. From $\cos\psi+\cos\phi-1=0$ we get that $\cos\psi = 1-\cos\phi$, and since $\sin\psi=-\sin\phi$, we know that 

\[(1-\cos\phi)^2+(-\sin\phi)^2=1\]
\[\Rightarrow 1+\cos^2\phi-2\cos\phi+\sin^2\phi = 1\]
\[\Rightarrow \frac{1}{2}=\cos\phi.\]

Using this, we get that $\phi = \frac{\pi}{3}$ or $\phi=\frac{5\pi}{3}$. Since $\sin\psi = -\sin\phi$, we get the critical points of $(\frac{\pi}{3},\frac{5\pi}{3})$ and $(\frac{5\pi}{3},\frac{\pi}{3}).$ In total we have 6 critical points identified, and we will use the Hessian to classify them.

Recall that $\nabla h = \langle-\sin(\phi-\psi)+\sin\phi,\ \sin(\phi-\psi)+\sin\psi\rangle$.

\[H_h(\phi,\psi) = \begin{bmatrix}
    -\cos(\phi-\psi)+\cos\phi & \cos(\phi-\psi)\\
    \cos(\phi-\psi) & -\cos(\phi-\psi)+\cos\psi
\end{bmatrix}\]
\[H_h(0,0) = \begin{bmatrix}
    0 & 1\\
    1 & 0
\end{bmatrix}\quad H_h(0,\pi) = \begin{bmatrix}
    2 & -1\\
    -1 & 0
\end{bmatrix}\]

\[H_h(\pi,0)=\begin{bmatrix}
    0 & -1\\
    -1 & 2
\end{bmatrix}\quad H_h(\pi,\pi) = \begin{bmatrix}
    -2 & 1\\
    1 & -2
\end{bmatrix}\]

\[H_h\biggl(\frac{\pi}{3},\frac{5\pi}{3}\biggr)=\begin{bmatrix}
    1 & \frac{-1}{2}\\
    \frac{-1}{2} & 1
\end{bmatrix}\quad H_h\biggl(\frac{5\pi}{3},\frac{\pi}{3}\biggr)=\begin{bmatrix}
    1 & \frac{-1}{2}\\
    \frac{-1}{2} & 1
\end{bmatrix}\]

Analyzing these Hessian matrices, we find that $(0,0), (0,\pi),$ and $(\pi,0)$ are saddle points, $(\frac{\pi}{3}, \frac{5\pi}{3})$ and $(\frac{5\pi}{3},\frac{\pi}{3})$ are local minima, and $(\pi,\pi)$ is a local maximum. 
\end{proof}

To discuss the findings from \cref{th:hadamardinit}, let us start by examining the local minima at $(\frac{\pi}{3}, \frac{5\pi}{3})$ and $(\frac{5\pi}{3},\frac{\pi}{3})$. If $N$ is large enough and the qubits are initialized using the Hadamard gate, these phase change pairs lead to the worst performance of the first iterate in the fully generalized Grover's algorithm. If we change our attention to the local maximum, we find the phase change pair $(\pi,\pi)$. So in this case, if $N$ is large enough and the qubits are initialized by the Hadamard gate, the best choice for the phase changes is phase matching with $\phi=\psi=\pi$ which is the classical Grover's algorithm.

\begin{theorem}\label{th:po1}
    Assume that $N$ is sufficiently large and there exists a function $f(N)$ such that $\lim_{N\to\infty}f(N)=\infty$ and $P(\ket{\tau})=\frac{f(N)}{N}$. Then the optimal phase change for a fully generalized Grover's algorithm is $\phi=\psi=\pi$.
\end{theorem}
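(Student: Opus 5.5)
Our plan mirrors the proof of \cref{th:hadamardinit}: we substitute the hypothesis on $\alpha$ into the objective of \cref{th:simplifiedargmax}, keep only the leading order in $N$, and then carry out a critical point analysis of the resulting function. From $P(\ket{\tau})=\sin^2\alpha=\frac{f(N)}{N}$ we get
\[\sin(2\alpha)=2\sqrt{\tfrac{f(N)}{N}}\sqrt{1-\tfrac{f(N)}{N}},\qquad \cos(2\alpha)=1-\tfrac{2f(N)}{N}.\]
We read the hypothesis as also keeping $P(\ket{\tau})$ bounded away from $1$, as the paper's informal description says. Then $\sin(2\alpha)\sim 2\sqrt{f(N)/N}$, so the three terms in \cref{eq:argmaxf} have orders
\[(N-1)^{3/2}\sin(2\alpha)\sim 2N\sqrt{f(N)},\qquad 2(N-1)\cos(2\alpha)\sim 2N,\qquad (N-1)^{1/2}\sin(2\alpha)\sim 2\sqrt{f(N)}.\]
Since $f(N)\to\infty$, the first term strictly dominates the other two. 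If instead $f(N)/N\to c\in(0,1)$, the first term is of order $N^{3/2}$ and still dominates.

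After dividing by $2N\sqrt{f(N)}$, the objective becomes
\[h(\phi,\psi)=\cos(\phi-\psi)-\cos\phi,\]
plus an error of order $1/\sqrt{f(N)}$ that tends to $0$ uniformly in $(\phi,\psi)$. The main obstacle is already visible here: $h$ does not have the unique maximizer $(\pi,\pi)$. Indeed $h\le 2$, with equality exactly when $\cos(\phi-\psi)=1$ and $\cos\phi=-1$, which forces $\phi=\pi$ and $\psi=0$ modulo $2\pi$. Taken literally, then, the leading-order problem selects $(\pi,0)$ rather than $(\pi,\pi)$. This is consistent with intuition: $\psi=0$ makes the diffusion step trivial and just undoes $\phi=\pi$. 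The candidate $(\pi,\pi)$ gives $h=0$.

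To still reach the stated conclusion, we would handle this as follows.
\begin{itemize}
\item First, redo the $f_3$ computation in the proof of \cref{th:simplifiedargmax} and check the signs of the cross term, since a sign slip there would flip the leading function. One candidate to test is $-\cos\phi-\cos\psi+\cos(\phi-\psi)$-type combinations like the one that appears in the proof of \cref{th:hadamardinit}.
\item If a corrected leading term $h$ does have a unique global maximum at $(\pi,\pi)$, finish as in \cref{th:hadamardinit}. Solve $\nabla h=0$, classify the critical points with the Hessian, and compare values to confirm that $(\pi,\pi)$ is the global maximizer.
\item Then conclude by a uniform-convergence argument: maximizers of $h+o(1)$ converge to the unique maximizer of $h$. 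A nondegenerate negative definite Hessian there, via the implicit function theorem, gives optimal phases $\phi=\psi=\pi+o(1)$.
\end{itemize}

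If the sign check confirms the formula as written, we would not claim the theorem. We would instead report that the leading-order analysis gives $(\phi,\psi)=(\pi,0)$. Recall also that, by \cref{lem:phaseadj}, the physical phase change is $\phi-\theta$ rather than $\phi$. We would then flag the disagreement with the numerics in \cref{fig:210phasematch} as the point to resolve, most likely by rechecking the expansion of $|f|^2$.
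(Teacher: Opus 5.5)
Your leading-order reduction is the one the paper uses, and your order estimates are fine. That includes the proviso that $P(\ket{\tau})$ stay away from $1$, which the paper states only informally. The argument breaks at the evaluation of $h(\phi,\psi)=\cos(\phi-\psi)-\cos\phi$.

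The equality case $h=2$ needs $\cos(\phi-\psi)=1$, i.e.\ $\psi\equiv\phi \pmod{2\pi}$, not $\psi\equiv 0$. Combined with $\cos\phi=-1$, it forces $(\phi,\psi)\equiv(\pi,\pi)$. Directly, $h(\pi,\pi)=\cos 0-\cos\pi=2$, while $h(\pi,0)=\cos\pi-\cos\pi=0$.

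So the ``main obstacle'' is an arithmetic slip, not a defect of \cref{th:simplifiedargmax}. Redoing $f_3$ confirms the cross term $\sqrt{N-1}\sin(2\alpha)\bigl[(N-1)(\cos(\phi-\psi)-\cos\phi)-(\cos(\phi+\psi)-\cos\phi)\bigr]$. Under your own contingency plan the sign check would pass, and you would then wrongly decline the theorem and report $(\pi,0)$.

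Your intuition should have caught this. With $\psi=0$ the diffusion step $-FI^{0}_{\ket{0}}F=-I$ is trivial, so the iterate cannot change the target probability. Indeed the quantity inside the modulus in \cref{eq:argmaxupdate} becomes $-e^{i\phi}\sin\alpha$, giving $P(\ket{\tau})=\sin^2\alpha$ for every $\phi$. Thus $(\pi,0)$ is a zero-gain saddle of $h$, not its maximizer.

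With that fixed, your plan proves the statement, somewhat more cleanly than the paper does. The paper solves $\nabla g=0$, lists $(0,0),(0,\pi),(\pi,0),(\pi,\pi)$, and uses the Hessian to single out $(\pi,\pi)$ as the only local maximum. Your bound $h\le 2$ with its equality case gives the unique global maximizer in one line. Your uniform-convergence step, which the paper omits, then transfers this to the full objective, since $H_h(\pi,\pi)$ has trace $-3$ and determinant $1$ and so is negative definite.

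You can even upgrade $\pi+o(1)$ to exactly $\pi$. Every component of the gradient of the full objective in \cref{eq:argmaxf} is a combination of $\sin\phi$, $\sin\psi$ and $\sin(\phi\pm\psi)$, all of which vanish at $(\pi,\pi)$. So $(\pi,\pi)$ is an exact critical point for every $N$ and $\alpha$. Local uniqueness of critical points near a nondegenerate one (your implicit-function step) then identifies it as the maximizer for large $N$.
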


\begin{proof}
    Since $P(\ket{\tau})=\frac{f(N)}{N}$, then we know that $\sin\alpha = \sqrt{P(\ket{\tau})} = \sqrt{\frac{f(N)}{N}}$ before the update. Let us now look at \cref{eq:argmaxf} under our assumptions.

\begin{equation}
    \begin{split}
                \argmax_{\psi,\phi\in[-\pi,\pi]}&\ \biggl\{(N-1)^{3/2}\sin(2\alpha)(\cos(\phi-\psi)-\cos\phi)-\\
                &-2(N-1)\cos(2\alpha)\cos\psi\\
                & -(N-1)^{1/2}\sin(2\alpha)(\cos(\phi+\psi)-\cos\phi)\biggr\}
    \end{split}
\end{equation}

Here we can see that since $N$ is sufficiently large and $\sin\alpha =\sqrt{\frac{f(N)}{N}}$ that the leading term will be $(N-1)^{3/2}\sin(2\alpha)(\cos(\phi-\psi)-\cos\phi)$. Focusing on this specifically, if we are interested in maximizing over $\phi$ and $\psi$, we can simply focus on $\cos(\phi-\psi)-\cos(\phi)$. We can now just rely on a simple analysis of the gradient to identify all the critical points. Let $g(\phi,\psi) = \cos(\phi-\psi)-\cos\phi$. Then $\nabla g = \langle -\sin(\phi-\psi)+\sin\phi,\ \sin(\phi-\psi)\rangle$, and our critical points are $\{(0,0), (0,\pi), (\pi,0), (\pi,\pi)\}$. We then use the Hessian to classify these points.

\[H_g(\phi,\psi) = \begin{bmatrix}
    g_{\phi\phi} & g_{\phi\psi}\\
    g_{\psi\phi} & g_{\psi\psi}
\end{bmatrix}\]

\[=\begin{bmatrix}
    -\cos(\phi-\psi)+\cos\phi & \cos(\phi-\psi)\\
    \cos(\phi-\psi) & -\cos(\phi-\psi)
\end{bmatrix}\]
\[H_g(0,0) = \begin{bmatrix}
    0 & 1\\
    1 & -1
\end{bmatrix}\quad H_g(0,\pi) = \begin{bmatrix}
    2 & -1\\
    -1 & 1
\end{bmatrix}\]

\[H_g(\pi,0) = \begin{bmatrix}
    0 & -1\\
    -1 & 1
\end{bmatrix}\quad H_g(\pi,\pi) = \begin{bmatrix}
    -2 & 1\\
    1 & -1
\end{bmatrix}\]

Using the second derivative test, we see that $(0,0)$ and $(\pi,0)$ are saddle points, $(0,\pi)$ is a local min, and $(\pi,\pi)$ is a local max. What this means is that in the case where $N$ is sufficiently large and there exists a function $f(N)$ such that $\lim_{N\to\infty}f(N)=\infty$ and $P(\ket{\tau})=\frac{f(N)}{N}$, then the optimal phase change occurs at $\phi=\psi=\pi$ which is the traditional Grover's algorithm. 
\end{proof}

To make a quick remark on the assumption that $N$ is sufficiently large, the ability for us to focus on the leading term happens even in relatively small systems due to the fact that $N=2^n$ where $n$ is the number of qubits. With this in mind, let us keep our assumption on $P(\ket{\tau})$ and that $n=10$. We can first look at the $(N-1)$ coefficients in \cref{eq:argmaxf}, since the rest of each term is a product and sum of trig functions which by our assumption will not be too small. Then we can see that $N=2^{10}$ will lead to the term with the coefficient $(N-1)^{3/2}$ overpowering the rest and being the main piece we can focus our analysis on.

We would also like to draw an analogous argument from this analysis and most notably \cref{th:po1} to our previous work on optimizing generalized Grover's algorithm \cite{CK}. What we found previously was that as long as $P(\ket{\tau})$ stayed away from zero and one, the optimal phase change (under the assumption that $\phi=\psi$) was numerically shown to be approximately $\pi$ given that the amplitude vector was real. If it wasn't real, then it was the case that the optimal phase change was approximately equal to $\pi-\theta$ where $\theta$ is from $e^{i\theta}$ in the description of the amplitude vector's complexity. Also we have proved that the optimal phase is precisely $\pi$ until the target probability reaches a threshold quite close to one, which is also identified in the paper, in case we begin with the Hadamard initial gate. This coincides with the result we are seeing now where the optimal phase change is $\pi$ as long as $P(\ket{\tau})$ stays away from zero and one.

\section{\bf Example Case when $N$ is not Large}

Let us now consider an example scenario in which our analysis can go extra miles beyond compared to the classical Grover's algorithm. Let $N=2^5$, and let the initial amplitude vector be initialized using the Hadamard gate, meaning it is entirely real. We will compare classical Grover's algorithm with phase changes of $\psi=\phi=\pi$ and phase matching to the fully generalized Grover's algorithm with phase changes that come from optimal choices of $\psi$ and $\phi$ according to \cref{eq:argmaxf} and compare the target probabilities.

\begin{table}[h!]
\centering
\begin{tabular}{ccccc}\toprule
    & \multicolumn{2}{c}{Target} & \multicolumn{2}{c}{Phase Changes} \\ \cmidrule(lr){2-3}\cmidrule(l){4-5}
     Step & Amplitude & Probability & $\phi$ & $\psi$\\ \midrule 
    1 & 0.5082 & 0.2583 & $\pi$ & $\pi$\\
    2 & 0.7762 & 0.6024 & $\pi$ & $\pi$\\
    3 & 0.9471 & 0.8969 & $\pi$ & $\pi$\\
    4 & 0.9996 & 0.9992 & $\pi$ & $\pi$\\ \bottomrule
\end{tabular}
\centering
\vspace{3mm}
\caption{Classical Grover's algorithm used with a Hadamard initialized amplitude vector}
\label{tab:classicgrover}
\end{table}

\begin{table}[ht!]
\centering
    \begin{tabular}{ccccc}\toprule
    & \multicolumn{2}{c}{Target} & \multicolumn{2}{c}{Phase Changes} \\ \cmidrule(lr){2-3}\cmidrule(l){4-5}
    Step & Amplitude & Probability & $\phi$ & $\psi$\\ \midrule 
    1 & 0.5082 & 0.2583 & $\pi$ & $\pi$\\
    2 & 0.7762 & 0.6024 & $\pi$ & $\pi$\\
    3 & 0.9471 & 0.8969 & $\pi$ & $\pi$\\
    4 & 0.9226 + 0.3858i & 1.0 & -2.7218 & -2.3493\\ \bottomrule
\end{tabular}
\centering
\vspace{3mm}
\caption{Fully generalized Grover's algorithm with optimized $\phi$ and $\psi$ using \cref{eq:argmaxf} with a Hadamard initialized amplitude vector}
\label{tab:fggrover}
\end{table}

As can be seen in \cref{tab:classicgrover} and \cref{tab:fggrover}, the target element reaches its highest probability of observation on the fourth step. For classical Grover's we get a probability of .9992 and for the fully generalized Grover's a probability of 1.0. It is also important to note here that this resulting in a 1.0 probability is not guaranteed for every value of $N$. While the increase in probability is marginal, there is a clear improvement on the theoretical observation rate when using the optimization formula to determine the phase changes used in the fully generalized Grover's algorithm.

\section{{\bf Conclusion \& Further Research}}

We have found that in a fully generalized Grover's algorithm and for a majority of the iterations, phase matching is optimal, and in fact, the optimal phase change is $\phi=\psi=\pi$ which corresponds to the classical Grover's algorithm. An important caveat here is that if the amplitude vector is complex, then the phase matching will actually be between $\psi$ and $\phi+\theta$ where $\theta$ is the phase of the amplitude vector. We showed this relationship numerically, and we analyzed the optimal phase changes for specific assumptions made on either the order of $P(\ket\tau)$ or the initialization.

Looking towards our future research, our main interest lies in quantum random walks, and their applications to searching quantum spatial data. As stochastic objects, they exhibit interesting behavior that diverges from what we have observed in classical random walks. As for their use in search of spatial data, we feel there is promise in this area due to their applications in classical computing for searching large datasets. We look forward to continuing our work in quantum search algorithms, as well as analyzing stochastic objects in the quantum computation setting.



\bibliography{sn-bibliography}
\bibliographystyle{IEEEtran}

\end{document}